\newcommand {\br} [1] {\ensuremath{ \left( #1 \right) }}
\newcommand {\Br} [1] {\ensuremath{ \left[ #1 \right] }}
\newcommand {\cbr} [1] {\ensuremath{ \left\lbrace #1 \right\rbrace }}
\newcommand {\minusspace} {\: \! \!}
\newcommand {\fn} [2] {\ensuremath{ #1 \minusspace \br{ #2 } }}
\newcommand {\Fn} [2] {\ensuremath{ #1 \minusspace \Br{ #2 } }}
\newcommand {\eqdef} {\ensuremath{ \stackrel{\mathrm{def}}{=} }}
\newcommand {\Prob} [1] {\Fn{\Pr}{#1}}
\newcommand {\bigo} [1] {\fn{O}{#1}}
\newcommand {\ket} [1] {\ensuremath{ \left| #1 \right\rangle }}
\newcommand {\ketbratwo} [2] {\ensuremath{ \left| #1 \middle\rangle \middle\langle #2 \right| }}
\newcommand {\ketbra} [1] {\ketbratwo{#1}{#1}}
\newcommand {\ceil} [1] {\ensuremath{ \left\lceil #1 \right\rceil }}
\newcommand{\register}[1]{\ensuremath{\mathsf{#1}}}
\newcommand{\inR}{\ensuremath{\in_{R}}}
\newcommand{\norm}[1]{\ensuremath{ \left\| #1 \right\| }}
\newcommand{\email}[1]{\newline\texttt{#1}}
\newcommand{\keywords}[1]{\paragraph{Keywords:} #1}
\theoremstyle{plain}
\newtheorem{theorem}{Theorem}[section]
\theoremstyle{remark}
\newtheorem{remark}[theorem]{Remark}
\newcommand{\papertitle}{Quantum Inspired Adaptive Boosting}
\date{\formatdate{1}{2}{2021}}
\begin{document}

\title{\textbf{\papertitle}\thanks{%
BD and AP was supported by the Ministry of Innovation and the National Research,
	Development and Innovation Office within the framework of the
	Artificial Intelligence National Laboratory Program.
BD was also supported by MTA Premium Postdoctoral Grant 2018.
The research  of KF, LK, DSz  has been supported by the National
Research, Development and Innovation Fund (TUDFO/51757/2019-ITM,
Thematic Excellence Program), by the BME-Artificial Intelligence FIKP
grant of EMMI (BME FIKP-MI/SC), and by the BME NC TKP2020 grant of
NKFIH Hungary.
DSz was also supported by UNKP-18-1 New National Excellence Program of
the Ministry of Human Capacities.
AP was also supported by the European Research Council (ERC) Consolidator Grant SYSTEMATICGRAPH (no. 725978).}}

\author[1]{B\'alint Dar\'oczy}
\affil[1]{Institute for Computer Science and Control (SZTAKI),
	E\"otv\"os Lor\'and Research Network (ELKH),
	Budapest, Hungary\email{\{daroczyb,peresz\}@ilab.sztaki.hu}}
\author[2]{Katalin Friedl}
\affil[2]{Department of Computer Science and Information Theory,
	Budapest University of Technology and Economics,
	Budapest, Hungary\email{\{friedl,kabodil,szabod\}@cs.bme.hu}}
\author[2]{L\'aszl\'o Kab\'odi}
\author[1]{Attila Pereszl\'enyi}
\author[2]{D\'aniel Szab\'o}

\maketitle

\begin{abstract}
Building on the quantum ensemble based classifier algorithm of
Schuld and Petruccione \cite{schuld2018quantum}, we devise equivalent
classical algorithms which show that this quantum ensemble method does not
have advantage over classical algorithms.
Essentially, we simplify their algorithm until it is intuitive to come
up with an equivalent classical version.
One of the classical algorithms is
extremely simple and runs in constant time for each input to be classified.
We further develop the idea and, as the main contribution of the paper,
we propose methods inspired by combining the quantum ensemble
method with adaptive boosting.
The algorithms were tested and found to be comparable to the AdaBoost
algorithm on publicly available data sets.
\end{abstract}

\keywords{Machine learning, AdaBoost, quantum algorithm}

\section{Introduction}

Recent results of quantum machine learning models \cite{schuld2015introduction,biamonte2017quantum,alvarez2017supervised,Kerenidis2018} mostly consider gradient descent optimization \cite{gilyen2019optimizing} or suggest quantum circuits for classification models, e.g., support vector machines \cite{rebentrost2014quantum}, neural networks \cite{Allcock2018} with tensor networks \cite{huggins2018towards}, quantum ensembles \cite{schuld2018quantum}.

In the classical setting,
bagging and boosting (or ensemble) methods are among the most efficient meta machine learning algorithms.
Bootstrap aggregating (or bagging in short) increases generalization via resampling the available training data during the algorithm. In comparison, boosting determines a set of ``weak'' performing models in an iterative process.
In every step the algorithm selects a model which improves the already selected models either on instance (sample point) or feature (attribute) levels.
Since complex models are capable of learning the training set almost perfectly, in a typical setting, the learners are less capable or ``weak'' otherwise the algorithm may stop after the first iteration.
Besides Adaptive Boosting (or AdaBoost, proposed by Freund and Schapire in \cite{freund1997decision}) Gradient Boosting \cite{friedman2001greedy} and Random Forest \cite{breiman2001random} make decisions as a linear combination over a finite set of ``weak'' classifiers. In comparison to AdaBoost, the quantum ensemble in \cite{schuld2018quantum} can be interpreted as a stochastic boosting algorithm on a finite set of learners where the learners are previously determined.

\subsection{Our contribution}
Recent results on quantum machine learning show that many proposed
quantum algorithms do not have advantage over classical algorithms
\cite{Tang2019,tang2018quantum,Chia2020}.
We investigated the quantum classifier of Schuld and Petruccione
\cite{schuld2018quantum} that selects a weak learner according
to a distribution based on how good they are.
We simplified their algorithm to the point where it is intuitively easy
to give an equivalent classical algorithm.
We show that a simple classical randomized method achieves the same result without changing the time complexity.
After that, an even simpler, constant time classical method is given.

We note that, independently to our work, Abbas, Schuld, and Petruccione \cite{abbas2020quantum}
also shown that the ensemble method can be turned into a classical algorithm.
Our construction, however, is arguably simpler and more direct, especially
the constant time algorithm.

Our main contribution is a classical algorithm that combines these ideas with boosting and presents a method where the classification is based on a weighted average of the weak learners.
Two types of weights are involved that are computed alternately:
one on the samples, representing how difficult they are to learn
and one on the learners, representing how well they perform on the chosen samples.
Here we consider only the case of binary classification but the method can be extended.
In our experiments we have different implementations of the described algorithms.
A preliminary version of this paper has appeared in \cite{QuantumBoostingPrelim}.

\subsection{Organization of the paper}
For completeness, in \cref{sec:original classifier} we describe the idea of Schuld and Petruccione \cite{schuld2018quantum}.
In \cref{sec:classical equivalent}, we show our simple classical method that achieves the same result.
After that, an even simpler, constant time classical method is given in \cref{sec:constant time}.
\Cref{sec:adaptive stochastic} is the main part where we present
our algorithm that combines the ensemble method with adaptive boosting.
Our experiments with these algorithms are summarized in \cref{sec:comparisons}.

\section{Problem description}
Let \Br{N} denote the set \cbr{1, \ldots, N}.
Given $N$ training samples $\underline{x}_i\in \mathbb{R}^d$,
for each $i \in \Br{N}$ let $y_i \in \cbr{0,1}$ denote the label of $\underline{x}_i$.
A (binary) classifier is any function $\mathbb{R}^d \rightarrow \cbr{0,1}$.
Given $W$ of such classifiers $h_{\theta}$, $\theta \in \Br{W}$, which are called weak learners,
the main objective is to construct a classifier which approximates the $y_i$ well enough.
We assume that with each $h_{\theta}$ its negation is also among the given classifiers.

For a classifier $h$, let  $a_{i,h}=1-|h(\underline{x}_i)-y_i|$, i.e.,
$a_{i,h} =1$ when $h$ is correct on sample $\underline{x}_i$, otherwise $a_{i,h}=0$.
The accuracy of $h$ is the ratio of the number of samples where it is correct,
$a_h = \sum_{i=1}^N a_{i,h}/N$.
For the given classifiers $h_{\theta}$ we use the notation $a_{i,\theta}$ and $a_{\theta}$ instead of $a_{i,h_{\theta}}$ and $a_{h_{\theta}}$, respectively.

The main objective of classification is to construct a classifier $h$ which provides fairly good classification for unknown input. We evaluated the performance of the models with Area Under Curve (AUC) of the Receiver Operating Characteristics (ROC) \cite{green1966signal}.

\section{Quantum ensemble classifier}

Our work was inspired by the quantum classifier algorithm of Schuld
and Petruccione \cite{schuld2018quantum}. First, we give an outline of their algorithm.

\subsection{The original quantum ensemble classifier}
\label{sec:original classifier}

The algorithm uses five quantum registers. These are: \register{X} for the index of a training sample,
\register{Y} for the label of the training sample,
\register{H} for the index of a classifier,
\register{G} for the label the classifier assigns to the training sample,
and \register{Q} is an extra qubit.
This state requires $\ceil{\log N} + \ceil{\log W} + 3$ qubits.

For each $i \in \Br{N}$, we can
perform unitaries $U_i$ and $V_i$ for which
$U_i \ket{0 \ldots 0}_{\register{X}} = \ket{i}_{\register{X}}$ and
$V_i \ket{0}_{\register{Y}} = \ket{y_i}_{\register{Y}}$ where $y_i$ is the
label of $\underline{x}_i$.
These unitaries are efficiently constructible using elementary classical
reversible gates.
Furthermore, the evaluation of classifiers $h_j$ at $\underline{x_i}$
can also be performed by classical reversible gates.
\par

The initial state is the following:
\begin{align*}
	\ket{0 \ldots 0}_{\register{X}} \otimes
	\ket{0}_{\register{Y}} \otimes
	\ket{0 \ldots 0}_{\register{H}} \otimes
	\ket{0}_{\register{G}} \otimes
	\ket{0}_{\register{Q}}.
\end{align*}
Using Hadamard gates, a uniform superposition
on \register{H} and \register{Q} can be obtained:
\[	\frac{1}{\sqrt{2W}}
	\sum_{\theta = 1}^{W}
	\ket{0 \ldots 0}_{\register{X}} \otimes
	\ket{0}_{\register{Y}} \otimes
	\ket{\theta}_{\register{H}} \otimes
	\ket{0}_{\register{G}} \otimes
	\br{\ket{0}_{\register{Q}} + \ket{1}_{\register{Q}}}.
\]
We apply $U_1$ and $V_1$ which results in the following state
\[	\frac{1}{\sqrt{2W}}
	\sum_{\theta = 1}^{W}
	\ket{1}_{\register{X}} \otimes
	\ket{y_1}_{\register{Y}} \otimes
	\ket{\theta}_{\register{H}} \otimes
	\ket{0}_{\register{G}} \otimes
	\br{\ket{0}_{\register{Q}} + \ket{1}_{\register{Q}}}.
\]
The evaluation of classifiers $h_{\theta}$
can also be performed by classical reversible gates.
We flip qubit \register{G} if $\fn{h_{\theta}}{\underline{x}_1} = y_1$ (the classification is successful).
The resulting state is
\[	\frac{1}{\sqrt{2W}}
	\sum_{\theta = 1}^{W}
	\ket{1}_{\register{X}} \otimes
	\ket{y_1}_{\register{Y}} \otimes
	\ket{\theta}_{\register{H}} \otimes
	\ket{a_{1, \theta}}_{\register{G}} \otimes
	\br{\ket{0}_{\register{Q}} + \ket{1}_{\register{Q}}} .
\]
Now qubit \register{Q} is rotated around the $z$-axis on the Bloch sphere
by a small angle if $a_{1, \theta} = 1$, that is when the
algorithm successfully classified $\underline{x}_1$.
The operator corresponding to this rotation is
$\ketbra{0} + e^{j \pi / N} \ketbra{1}$
where $j$ is the imaginary unit.
The resulting state is
\begin{align*}
	\frac{1}{\sqrt{2W}}
	\sum_{\theta = 1}^{W}
	\ket{1}_{\register{X}} \otimes
	\ket{y_1}_{\register{Y}} \otimes
	\ket{\theta}_{\register{H}} \otimes
	\ket{a_{1, \theta}}_{\register{G}} \otimes
	\br{\ket{0}_{\register{Q}} + e^{j \pi a_{1, \theta} / N} \ket{1}_{\register{Q}}}.
\end{align*}
The next task is to uncompute everything except the previous rotation.
Then we repeat the above procedure with all the other training samples.
In the end the state is
\begin{align*}
	\frac{1}{\sqrt{2W}}
	\sum_{\theta = 1}^{W}
	\ket{0}_{\register{X}} \otimes
	\ket{0}_{\register{Y}} \otimes
	\ket{\theta}_{\register{H}} \otimes
	\ket{0}_{\register{G}} \otimes
	\br{\ket{0}_{\register{Q}} + e^{j \pi a_{\theta}} \ket{1}_{\register{Q}}} .
\end{align*}
After measuring \register{Q} in the \cbr{\ket{+}, \ket{-}} basis where
\begin{align*}
	\ket{+} \eqdef \frac{\ket{0} + \ket{1}}{\sqrt{2}}
	\qquad \text{and} \qquad
	\ket{-} \eqdef \frac{\ket{0} - \ket{1}}{\sqrt{2}} \text{,}
\end{align*}
we post-select on the measurement outcome being $\ket{-}_{\register{Q}}$.
This means that if we happen to project on $\ket{+}_{\register{Q}}$ then
we throw away the result and redo the whole algorithm from the beginning.
After this training phase we measure register \register{H} and use the resulting classifier
to classify an input with unknown label.

Here we deviate from the original algorithm because it makes the algorithm simpler.
In its original form, register \register{H} is not measured but kept in superposition.
In the end, however, only the one-qubit output register is measured and register \register{H}
is effectively traced out.
So measuring register \register{H} earlier does not change the output distribution
of the algorithm.
This can be verified by simple calculation.

\begin{theorem} \label{thm:quant}
For the above algorithm
\begin{align}
	\Prob{\text{Alg.\ returns } \theta}
	= \frac{1}{\chi} \fn{\sin^2}{\frac{\pi a_{\theta}}{2}}
	\qquad \text{where} \qquad
	\chi = \sum_{\vartheta=1}^{W} \fn{\sin^2}{\frac{\pi a_{\vartheta}}{2}}.
	\label{eqn:quant probability}
\end{align}
Assuming that the involved unitaries need constant time, the expected time complexity of the training phase of the algorithm is \bigo{N}.
The probability that an unknown input $\underline{x}$ will be classified as $1$ is
\begin{align*}
	 \frac{1}{\chi}
	\sum_{\theta : h_{\theta}(\underline{x}) = 1}
	\fn{\sin^2}{\frac{\pi a_{\theta}}{2}}.
\end{align*}
\end{theorem}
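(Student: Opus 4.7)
The plan is to directly compute the output distribution by analyzing the state just before measurement of $\register{Q}$ and then invoking the post-selection. After all uncomputations, the relevant state on $\register{H} \otimes \register{Q}$ is
\[
\frac{1}{\sqrt{2W}} \sum_{\theta=1}^{W} \ket{\theta}_{\register{H}} \otimes \br{\ket{0}_{\register{Q}} + e^{j \pi a_{\theta}} \ket{1}_{\register{Q}}}.
\]
Rewriting $\ket{0}$ and $\ket{1}$ in the $\cbr{\ket{+}, \ket{-}}$ basis, the coefficient of $\ket{\theta}_{\register{H}}\ket{-}_{\register{Q}}$ becomes $\tfrac{1}{\sqrt{2W}} \cdot \tfrac{1 - e^{j \pi a_{\theta}}}{\sqrt{2}}$. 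A short calculation using $\lvert 1 - e^{j \pi a_{\theta}} \rvert^{2} = 2 - 2\cosf{\pi a_{\theta}} = 4\sinf{\pi a_{\theta}/2}^{2}$ shows that the joint probability of measuring $\ket{-}_{\register{Q}}$ and observing $\theta$ on $\register{H}$ equals $\tfrac{1}{W}\sinf{\pi a_{\theta}/2}^{2}$. Summing over $\theta$ yields the success probability $\chi/W$ of the post-selection, and dividing gives the conditional distribution of \cref{eqn:quant probability}.

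For the complexity, each attempt of the training phase performs a constant number of unitaries per sample, hence runs in $\bigo{N}$. The expected number of attempts is $W/\chi$, so the remaining task is to bound $\chi$ from below. Here the assumption that, together with every $h_{\theta}$, its negation is also among the weak learners becomes decisive: pairing $\theta$ with the index $\theta'$ of its negation we have $a_{\theta'} = 1 - a_{\theta}$, and
\[
\fn{\sin^{2}}{\tfrac{\pi a_{\theta}}{2}} + \fn{\sin^{2}}{\tfrac{\pi (1 - a_{\theta})}{2}} = \fn{\sin^{2}}{\tfrac{\pi a_{\theta}}{2}} + \fn{\cos^{2}}{\tfrac{\pi a_{\theta}}{2}} = 1.
\]
Summing over such pairs gives $\chi = W/2$, so post-selection succeeds with probability $1/2$ and the expected number of restarts is at most $2$, yielding total expected time $\bigo{N}$.

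Finally, since the classifier returned by the training phase is $h_{\theta}$ with probability $\tfrac{1}{\chi}\sinf{\pi a_{\theta}/2}^{2}$ and each $h_{\theta}$ then deterministically assigns a label to the new input $\underline{x}$, the probability of output $1$ is obtained by summing the distribution over those $\theta$ with $\fn{h_{\theta}}{\underline{x}} = 1$, giving the claimed expression. The main delicate point is the complexity bound, specifically ensuring $\chi = \bigomega{W}$; everything else is routine linear algebra on a small number of qubits and an application of the law of total probability.
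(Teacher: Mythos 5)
Your proposal is correct and follows essentially the same route as the paper's (much terser) proof: the paper declares the probability formula to "follow immediately," uses the same negation-pairing assumption to bound the post-selection failure probability (it states at most $3/4$, where you derive the sharper $\chi = W/2$ exactly), and concludes the classification probability by the same conditioning argument. Your write-up simply supplies the explicit basis-change calculation and the identity $\fn{\sin^2}{\pi a_\theta/2}+\fn{\cos^2}{\pi a_\theta/2}=1$ that the paper leaves implicit.
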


\begin{remark}
	The formula for the probability in \cref{eqn:quant probability} is different
	from what is claimed in \cite{schuld2018quantum} where it is said that the probability
	is proportional to $a_{\theta}$.
	We believe this is an error in the cited paper.
	(This error is also present in \cite{abbas2020quantum}.)
\end{remark}

\begin{proof}[Proof of \cref{thm:quant}]
The claim about the probability to obtain $\theta$ follows immediately.

For the time complexity we use the assumption that together with a classifier $h_{\theta}$ its negation is also present.
This ensures that the post-selection fails with probability at most $3/4$, so the expected number of repetitions is constant and the total time is \bigo{N}.

For the last claim, notice that
the result of the training phase is a random variable $\Theta$ for which
\begin{align*}
	\Prob{h_{\Theta} \text{ classifies $\underline{x}$ as } 1}
	= \frac{1}{\chi}
	\sum_{\vartheta : h_{\vartheta}(\underline{x}) = 1}
	\fn{\sin^2}{\frac{\pi a_{\vartheta}}{2}}. \tag*{\qedhere}
\end{align*}
\end{proof}

\subsection{A classical equivalent}
\label{sec:classical equivalent}

First, we note that registers \register{X} and \register{Y} are unnecessary
to include in the ``quantum part'' of the algorithm
as they are always in a classical state.
Essentially, what the algorithm does is it rotates a qubit
depending on the training data and the classifier $\theta$
and after that the qubit is measured.
This gives a probability distribution on $\theta$.
This simplicity suggests that the same distribution on $\theta$
can be obtained by a classical randomized procedure.
This algorithm is presented in \cref{alg:classic_o_n}.

\begin{algorithm}
\caption{Classical version} \label{alg:classic_o_n}
\raggedright
Let $g: \Br{0,1} \to \Br{0,1}$ be an ``easily computable'',
monotone increasing function.\\[-2ex]
\begin{algorithmic}[1]
	\STATE \label{step:beginning}
	Pick $\theta \inR \Br{W}$ uniformly at random.
	\STATE For all $i \in \Br{N}$ calculate $a_{i,\theta}$ and from them calculate
	$a_{\theta}$.
	\STATE Pick $r \inR \Br{0,1}$ uniformly and independently from $\theta$ at random.
	\STATE If $r > \fn{g}{a_{\theta}}$ then go back to step~\ref{step:beginning}, otherwise output $\theta$.
\end{algorithmic}
\end{algorithm}

We prove that this algorithm with the right choice of $g$ is as good as the quantum algorithm of \cref{sec:original classifier}.
\begin{theorem}
	The output of \cref{alg:classic_o_n} is
\begin{align}
	\Prob{\text{Alg.\ returns } \theta}
	= \frac{\fn{g}{a_{\theta}}}{\chi}
	\qquad \text{where} \qquad
	\chi =\sum_{\vartheta = 1}^{W} \fn{g}{a_{\vartheta}}
	\label{eqn:general probability}
\end{align}
Assuming that the computation of $\fn{g}{a}$ takes constant time, the expected time is \bigo{N}.
The probability that an unknown input $\underline{x}$ will be classified as $1$ is
\begin{align*}
	 \Prob{h_{\Theta} \text{ classifies $\underline{x}$ as } 1}
	=\frac{1}{\chi}
	\sum_{\theta : h_{\theta}(\underline{x}) = 1}
	\fn{g}{a_{\theta}}.
\end{align*}
\end{theorem}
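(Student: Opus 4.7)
The plan is to analyze \cref{alg:classic_o_n} as a standard rejection-sampling procedure. One iteration picks a candidate $\theta$ uniformly at random from $\Br{W}$ and then accepts it with probability $\fn{g}{a_\theta}$ (the probability that the independent uniform $r \in \Br{0,1}$ falls below $\fn{g}{a_\theta}$). So the joint probability, in a single iteration, of selecting $\theta$ and accepting it is $\fn{g}{a_\theta}/W$, and the per-iteration acceptance probability summed over all $\theta$ is $p_{\mathrm{acc}} = \chi/W$.

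For the first claim, I would either (i) sum the geometric series over the number of rejections preceding the acceptance, or more cleanly (ii) note that the returned value has the conditional distribution of $\theta$ given acceptance. Either way one obtains
\begin{align*}
\Prob{\text{Alg.\ returns } \theta}
= \frac{\fn{g}{a_\theta}/W}{\chi/W}
= \frac{\fn{g}{a_\theta}}{\chi},
\end{align*}
which proves \cref{eqn:general probability}.

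For the expected time, each iteration costs $\bigo{N}$ (computing all $a_{i,\theta}$ and then $a_\theta$), plus an $\bigo{1}$ evaluation of $g$ and of the comparison with $r$. The expected number of iterations is $1/p_{\mathrm{acc}} = W/\chi$, so I need a constant lower bound on $\chi/W$. Here I would invoke the standing assumption that the negation of every classifier is also among the $W$ learners. Pairing each $h_\theta$ with its negation gives $a_\theta + a_{\comp\theta} = 1$, and since $g$ is monotone increasing, at least one of $a_\theta, a_{\comp\theta}$ is $\geq 1/2$, so $\fn{g}{a_\theta} + \fn{g}{a_{\comp\theta}} \geq \fn{g}{1/2}$. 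Summing over the $W/2$ pairs yields $\chi \geq W \fn{g}{1/2}/2$, which is $\bigomega{W}$ provided $\fn{g}{1/2} > 0$; hence the expected number of iterations is $\bigo{1}$ and the total expected time is $\bigo{N}$.

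For the last claim, let $\Theta$ denote the random output of the algorithm. By the total probability formula and the distribution just computed,
\begin{align*}
\Prob{h_\Theta \text{ classifies } \underline{x} \text{ as } 1}
= \sum_{\theta = 1}^{W} \Prob{\Theta = \theta} \cdot \mathbb{1}\Br{\fn{h_\theta}{\underline{x}} = 1}
= \frac{1}{\chi} \sum_{\theta : \fn{h_\theta}{\underline{x}} = 1} \fn{g}{a_\theta},
\end{align*}
which is the stated formula. I expect the main obstacle to be cosmetic rather than substantive: it is the time-complexity bound, because it is the only step that requires a global structural assumption (the presence of negations, plus a mild nondegeneracy of $g$ at $1/2$) rather than a direct calculation from the algorithm's lines.
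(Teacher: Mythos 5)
Your proof is correct and takes essentially the same route as the paper: the paper likewise computes the per-round output probability $g(a_\theta)/W$, deduces the stated conditional distribution, and handles the running time and the classification probability exactly as in the quantum case, i.e., via the assumption that each classifier's negation is present (giving a constant expected number of repetitions) and the law of total probability. Your only addition is to make explicit the mild hypothesis $g(1/2) > 0$ needed to bound the acceptance probability $\chi/W$ below by a constant, which the paper leaves implicit (it holds for the intended choice $g(a) = \sin^2(\pi a / 2)$).
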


\begin{proof}
The probability that in one round the output is $\theta$, equals to
\[ \frac{\fn{g}{a_{\theta}}}{W} \text{,} \]
which implies the claim about the probability.
The rest is basically the same as for the quantum version.
\end{proof}

\begin{remark}
Notice that in the case $\fn{g}{a} = \fn{\sin^2}{\frac{\pi a}{2}}$ the results are the same as for the quantum algorithm of \cref{sec:original classifier}.
\end{remark}

\subsection{A constant time classical classifier}
\label{sec:constant time}

The previous classical algorithm,
for the case originally claimed in \cite{schuld2018quantum}, i.e.,
when the probabilities are proportional to the accuracies,
can be simplified to achieve constant expected running time.
\Cref{alg:classic_o_1} presents this very simple classical algorithm.

\begin{algorithm}[htb]
\caption{Classical constant time version}
\label{alg:classic_o_1}
\begin{algorithmic}[1]
	\STATE Pick $i \inR \Br{N}$ and $\theta \inR \Br{W}$ uniformly and independently
	at random.
	\label{step:simplest first}
	\STATE If $a_{i, \theta} = 0$ then go to
	step~\ref{step:simplest first}, otherwise output $\theta$.
\end{algorithmic}
\end{algorithm}

Simple calculation shows that
\begin{align*}
	\Prob{\text{Alg.\ returns } \theta}
	= \frac{a_{\theta}}{\sum_{\vartheta = 1}^{W} a_{\vartheta}}
\end{align*}
which is the same as \cref{eqn:general probability}
if $\fn{g}{a} = a$.

\section{Adaptive Stochastic Boosting, a new method}
\label{sec:adaptive stochastic}

The main idea is to combine the quantum ensemble classifier with iterative re-weighting.
	The algorithm iteratively weights the given classifiers and the training samples. The classifiers give  predictions for the training samples, and instead of selecting one weak learner according to some distribution, we take their weighted sum, where the weights reflect their accuracies. Then we weight the samples with respect to how difficult they are for the classifiers to learn. In the next iteration this distribution on the samples is taken into account. After a few iterations, the obtained weighted sum of classifiers is declared to be the  final one, and this is used to classify further samples.

	\Cref{sec:algorithm} describes the details and the convergence is shown using linear algebra.
	The main points of our implementations are sketched in \cref{sec:realizations}.
	The comparison of our implementations with the widely used AdaBoost is in \cref{sec:comparisons}.

\subsection{Algorithm} \label{sec:algorithm}
	As before, there are $N$ training samples, $W$ weak learners or classifiers, and each weak learner gives a prediction to every sample.

	In every iteration, an aggregation of the weights of weak learners is calculated and
the weights of the samples are determined by the predictions, the real labels, and the weights of the weak learners. These weights are always normalized to sum up to one (to form probability distributions). We determine the samples for the next iteration by sampling $N$ (not necessarily distinct) elements according to their weights. The set of the actual samples of the $t$-th iteration is denoted by $\left\{\left(\underline{x}_i^{(t)},y_i^{(t)}\right)\right\}_{i=1}^N$.

	The prediction for unknown input is given by the outputs of the weak learners weighted by their (final) aggregated weights. The algorithm is described in detail in \cref{algo:sampling}.
	The first part of the algorithm is the training phase: based on the training samples, the final weighting of the classifiers is computed.
	The second part is the test phase, where the model is used to classify new inputs.

\begin{algorithm}[h t b p]
	\caption{The sampling version of the algorithm}
	\label{alg:new_alg}
	\textbf{Training algorithm:}
	\begin{algorithmic}[1]
		\REQUIRE  training set: $(\underline{x}_1,y_1),\dots,(\underline{x}_N,y_N) \in (X, \cbr{0,1})$, \\
			weak learners: $h_1, h_2, \dots, h_W: X \rightarrow \cbr{0,1}$, \\
			number of iterations: $T\in \mathbb{Z}^+$.
		\ENSURE aggregated weight vector of the weak learners:
		$\underline{w}$.
		\STATE \textbf{Initialization:} elements of initial sample set: $\left(\underline{x}_i^{(1)},y_i^{(1)}\right)=(\underline{x}_i,y_i)$ for $i\in\{1,\dots,N\}$,\\
		aggregated weight vector of  weak learners: $\underline{w}_{aggr}^{(1)}=\underline{0}$.
		\FORALL{$t=1,2,\dots,T$}
			\STATE Compute predictions of  weak learners
		    $h_{\theta}(\underline{x}_i^{(t)})$.
			\STATE Calculate accuracy of  predictions $\displaystyle a_{\theta}^{(t)}=\sum_{i=1}^N 1-\left|h_{\theta}\left(\underline{x}_i^{(t)}\right)-y_i^{(t)}\right|$.
			\STATE Calculate weights of weak learners (normalization)
			$w_{\theta}^{(t)}=a_{\theta}^{(t)}/\norm{\underline{a}^{(t)}}_{1} $.
			\STATE Update aggregated weights
			$\underline{w}_{aggr}^{(t+1)}=(\underline{w}_{aggr}^{(t)}+\underline{w}^{(t)})/\norm{\underline{w}_{aggr}^{(t)}+\underline{w}^{(t)}}_{1}$.
			\STATE Calculate error on sample points
			$ q_i^{(t+1)} = \sum_{\theta=1}^{W}w_{\theta}^{(t)}\left|h_{\theta}\left(\underline{x}_i^{(t)}\right)-y_i^{(t)}\right|$.
			\STATE Calculate weights of samples (normalization)
			$p_i^{(t+1)}= q_i^{(t+1)}/\norm{\underline{q}^{(t+1)}}_1$.
			\STATE Sample $N$ elements from $\left\{\left(\underline{x}_i^{(t)},y_i^{(t)}\right)\right\}_{i=1}^N$ according to distribution $\underline{p}^{(t+1)}$ to get the next sample set $\left\{\left(\underline{x}_i^{(t+1)},y_i^{(t+1)}\right)\right\}_{i=1}^N$.
		\ENDFOR
		\STATE $\underline{w} = \underline{w}_{aggr}^{(T+1)}$.
	\end{algorithmic}
		\vspace*{3ex}
		\textbf{Classification algorithm:}
	\begin{algorithmic}[1]
		\REQUIRE sample $\underline{x} \in \mathbb{R}^d$ with unknown label,
		weak learners $h_{\theta},\ \theta\in \{1,\dots,W\}$,
		weight vector of learners $\underline{w} \in \mathbb{R}^W$.
		\ENSURE final prediction: $h_f(\underline{x})$.
		\IF{$\sum_{\theta=1}^{W}h_{\theta}(\underline{x}) \cdot w_{\theta}>0.5$}
			\STATE $h_f(\underline{x})=1$
		\ELSE
			\STATE $h_f(\underline{x})=0$
		\ENDIF
	\end{algorithmic}
	\label{algo:sampling}
\end{algorithm}

The algorithm can be formalized with matrices.
Let $M \in \mathbb{R}^{N\times W}$
be the matrix that describes the errors, i.e., $M_{i,j}=1-a_{i,j}$ which is $0$ if the $j$-th classifier classifies the $i$-th sample correctly,
otherwise $M_{i,j}=1$.
The matrix $M'\in \mathbb{R}^{W\times N}$ is obtained from $M$ by transposing the matrix and negating its entries,
so it is the matrix of accuracies,  $M'_{i,j}=a_{j,i}$.
Then, at some iteration both the vector $\underline{p}^{(t)}$ formed by the errors on samples
and the vector $\underline{w}^{(t)}$ formed by the accuracies of weak learners can be expressed easily.
Notice that
$\underline{q}^{(t+1)}=M\underline{w}^{(t)}$ and similarly, $\underline{a}^{(t)}=M'\underline{p}^{(t)}$. Moreover, $\underline{p}^{(t+1)}$ and $\underline{w}^{(t+1)}$ are just their normalized versions, where the sum of the coordinates are equal to 1.
Combining these, we obtain that  $\underline{w}^{(t+1)}=\alpha_t \cdot M' M\underline{w}^{(t)}$, so $\underline{w}^{(T)}=\alpha \cdot (M' M)^{T-1}\underline{w}^{(1)}$, where $\underline{w}^{(1)}=\alpha_1 \cdot M'\underline{p}^{(1)}$ and ${p}_i^{(1)}=1/N, ~\forall i\in\{1,\dots,N\}$, the $\alpha$'s are the normalizing factors to ensure that the sum of coordinates is 1.

Notice that the matrix $M'M$ has a nice combinatorial meaning:
its $(i,j)$ entry is the number of samples where $h_i$ is correct but $h_j$ errs.
Unfortunately, this is not necessarily a symmetric matrix but look at $M M'$.
This is an $N \times N$ matrix in which the $(i,j)$ entry contains the number of weak learners that are not correct on $x_i$ but correct on $x_j$.
Now in the case when together with every weak learner we also have its negation, this matrix is symmetrical.
Therefore, any sequence $(M M')^t \underline{v}/\norm{v}_1$ converges to an eigenvector if $M M' \underline{v} \ne 0$.
The rate of convergence is exponential in the gap between the largest and second largest absolute values of eigenvalues, for which the eigensubspaces are not orthogonal to  $\underline{v}$.
Since $(M' M)^t = M' (M M')^{t-1} M$,
this implies the following:
\begin{theorem}
Assuming that with every weak learner its negation is also included, the weight vectors $\underline{w}^T$ of  \cref{alg:new_alg} are convergent,
the rate of convergence depends on the eigenvalue gaps of matrix $M M'$.
\qed
\end{theorem}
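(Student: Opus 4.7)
The plan is to reduce the statement to the convergence of power iteration applied to the symmetric matrix $MM'$. The excerpt already shows that $\underline{w}^{(t+1)} = \alpha_t\, M'M\,\underline{w}^{(t)}$ with $\underline{w}^{(1)} = \alpha_1\, M'\underline{p}^{(1)}$ and $p_i^{(1)}=1/N$, so iterating gives
\[
\underline{w}^{(T)} \;=\; \alpha\cdot (M'M)^{T-1}M'\underline{p}^{(1)} \;=\; \alpha\cdot M'(MM')^{T-1}\underline{p}^{(1)}
\]
for some positive normalizing scalar $\alpha$. Hence it is enough to control the iterates $(MM')^t\underline{p}^{(1)}$ up to the $\ell_1$ normalization that the algorithm enforces; left-multiplication by the fixed linear map $M'$ and a final renormalization are continuous in the projective sense and preserve both the limit and its exponential rate.

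First I would verify symmetry of $MM'$ under the negation-closure hypothesis. Its $(i,j)$ entry is $\sum_\theta (1-a_{i,\theta})a_{j,\theta}$. Grouping the sum into pairs $\{h_\theta,\bar h_\theta\}$ and using $a_{i,\bar\theta}=1-a_{i,\theta}$, each pair contributes $(1-a_{i,\theta})a_{j,\theta} + a_{i,\theta}(1-a_{j,\theta}) = a_{i,\theta}+a_{j,\theta}-2a_{i,\theta}a_{j,\theta}$, which is manifestly symmetric in $(i,j)$. Hence $MM'$ is symmetric.

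Next, by the spectral theorem $MM'$ admits a real orthonormal eigenbasis $v_1,\dots,v_N$ with eigenvalues $|\lambda_1|\ge|\lambda_2|\ge\dots\ge|\lambda_N|$. Expanding $\underline{p}^{(1)}$ in this basis, the component along the top eigenspace grows like $\lambda_1^t$ while every other component grows at most like $|\lambda_2|^t$, so after normalization the direction of $(MM')^t\underline{p}^{(1)}$ tends to the top eigenspace at the geometric rate $(|\lambda_2|/|\lambda_1|)^t$, provided $\underline{p}^{(1)}$ has a nonzero projection onto that subspace. Nondegeneracy is easy to justify: $MM'$ has nonnegative entries and $\underline{p}^{(1)}$ is strictly positive, so by Perron--Frobenius the principal eigenvector can be chosen nonnegative, in which case its inner product with $\underline{p}^{(1)}$ is positive whenever it is nonzero.

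The main technical nuisance I foresee is reconciling the $\ell_1$ normalization used in \cref{alg:new_alg} with the $\ell_2$-based spectral analysis. Because $MM'$ and $\underline{p}^{(1)}$ are entrywise nonnegative, every iterate lies in the nonnegative orthant, where $\|\cdot\|_1$ and $\|\cdot\|_2$ differ by a factor bounded above and below by constants depending only on the dimension; hence the $\ell_1$- and $\ell_2$-normalized directions converge simultaneously and at the same exponential rate. Combining these steps, $\underline{w}^{(T)}$ converges to $M'v_1/\|M'v_1\|_1$ at rate $(|\lambda_2|/|\lambda_1|)^{T-1}$, which is the claim.
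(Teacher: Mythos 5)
Your proposal follows essentially the same route as the paper: establish symmetry of $MM'$ from the negation-closure assumption, apply power iteration / the spectral theorem to $(MM')^t$, and transfer back to the $\underline{w}$ iterates via the intertwining identity $(M'M)^{t}\,M' = M'(MM')^{t}$. The paper's own argument is a brief sketch preceding the theorem; you supply the same skeleton with more detail (the explicit pairing computation for symmetry and the Perron--Frobenius argument for the nonzero projection of $\underline{p}^{(1)}$), so the proposal is correct and matches the paper's approach.
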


\subsection{Experiments} \label{sec:realizations}
	The algorithm was implemented in the Python language, using the Jupyter Notebook environment. We designed different versions based on sampling and on the matrix form to see their behavior in practice.

In general, a classification algorithm for a new input computes a value between $0$ and $1$ and depending on some threshold, the final answer will be either $0$ or $1$.
When one examines the quality of such a result, it clearly depends on the choice of the threshold value. On the other hand, the accuracy  is not a good measure in several applications.

In experiments instead of accuracy we used Area Under Curve (AUC) of the Receiver Operating Characteristics (ROC) \cite{green1966signal}.
A nice property of AUC is its threshold invariance. To define it, we have to present some other definitions first.
Let us consider the label of value 1 to be the positive outcome and 0 to be the negative one.

False Positive Rate (FPR) is the number of samples that were classified as positive
 but were negative in fact
 (i.e., $h(\underline{x}_i) = 1$ but $y_i=0$), divided by the number of all the negative samples.
True Positive Rate (TPR) is the ratio of the number of correctly classified positive samples to the number of all the positive samples.
The ROC (Receiver Operating Characteristic) curve plots FPR and TPR at different classification thresholds.
AUC is the Area Under the ROC Curve.

\begin{figure}[!ht]
	\center
	\begin{tikzpicture}
	\node[anchor=south west,inner sep=0] at (0,0) {\includegraphics[width=60mm]{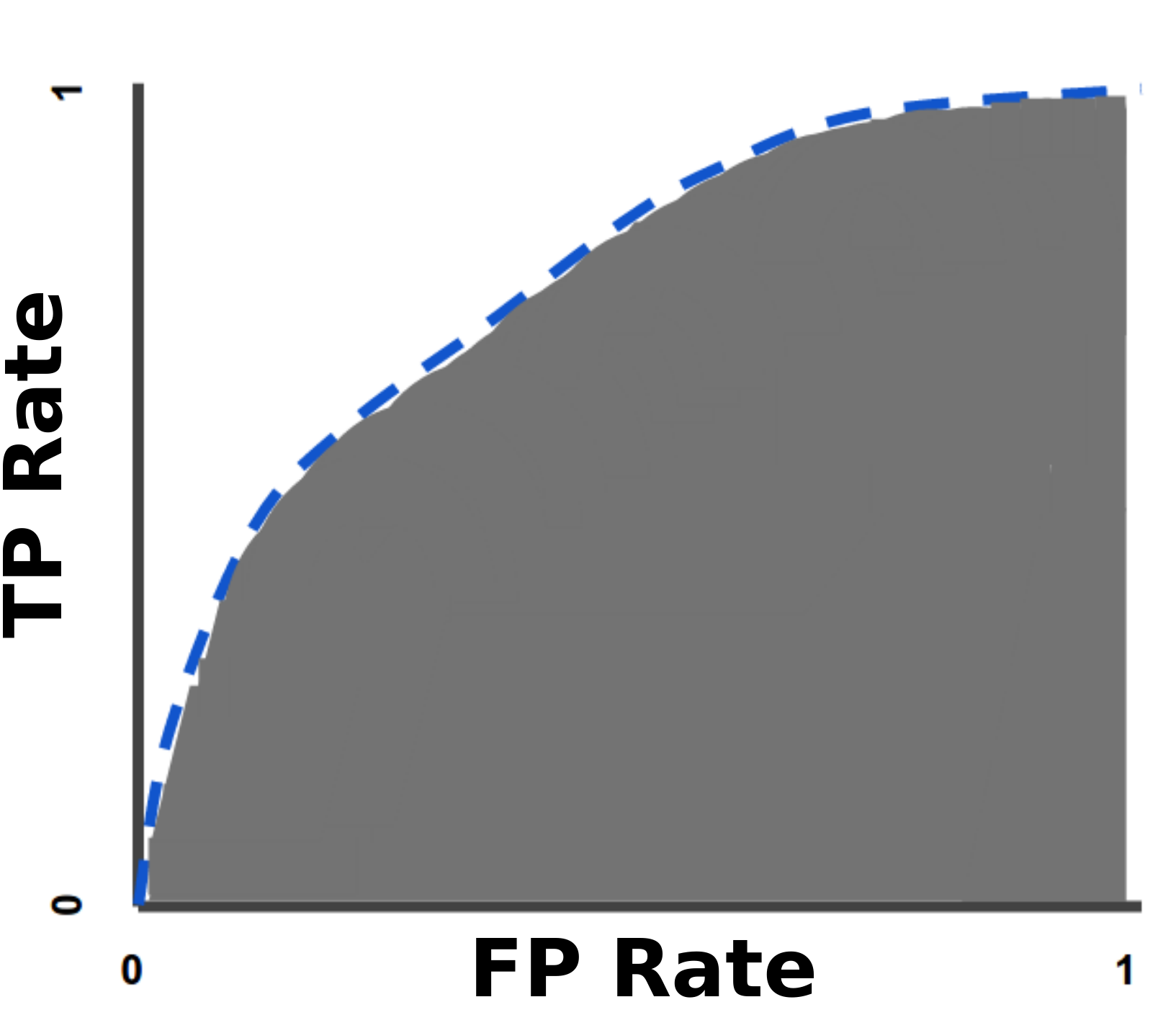}};
	\node[white] (AUC) at (4,2) {\textbf{AUC}};
	\node[blue] (ROC) at (2.4,4) {\textbf{ROC}};
	\fill[white] (2,0) rectangle (4.5,0.45);
	\fill[white] (0,4) rectangle (0.5,1.5);
	\node (FP) at (3.25,0.25) {\textbf{FPR}};
	\node[rotate=90] (TP) at (0.2,2.75) {\textbf{TPR}};
	\end{tikzpicture}
	\caption{An example of ROC curve and AUC.}
	\label{fig:AUC}
\end{figure}

Clearly, AUC is between 0 and 1. An exact classification achieves
$\textrm{AUC}=1$, and
 a classification that assigns 0 or 1, independently of the input has $\textrm{AUC}=1/2$.
Notice that $\textrm{AUC}=0$ is also good in practice, since switching the labels results in $\textrm{AUC}=1$.

\subsubsection{Realization by sampling}
	\label{sec:sampling}
	In our experiments we used decision stumps as weak learners. Three parameters describe a decision stump: the property it refers to (a coordinate of the input vector); the threshold (to separate the 0 and the 1 predictions); and the side of the threshold which contains the 1 predictions.

\subsubsection{Realization with matrix multiplication}

Based on the description after \cref{alg:new_alg} in each step we multiplied the current weight vector by the matrix $M' M$. (In the case when the number of weak learners is much higher than the number of training samples, the multiplication of the appropriate vector by $M M'$ is of course, a better choice.)

In our experiments  the matrix elements were not only $0$ and $1$ but we used elements from the interval $[0, 1]$. Instead of watching only on which side of the threshold of the decision stump the sample is, we look at the distance from the threshold as well. This way the ``certainty'' of the classification is taken into account.

Moreover, in the actual algorithm, making it more similar to the sampling method, we modify the $\underline{p}^{(t)}$ vectors by keeping each coordinate $p_i^{(t)}$ with probability $p_i^{(t)}$, otherwise changing it to $0$.

\subsection{Comparison of the results}
\label{sec:comparisons}
A randomly chosen tenth of the data set was used as validation set, the remaining nine tenths was the training set. We used the AUC value
for measuring the quality of the learning due to AUC's threshold independence.
It means that in fact we do not use a concrete threshold (e.g., 0.5) as written in the previous sections, but only observe if the algorithm gives a good separation of the samples.

We compare five algorithms: AdaBoost from the \emph{ensemble} package of \emph{sklearn}; the expected output of the quantum classifier; and three versions of the new algorithm (realizations using sampling, matrix multiplication, and eigenvector). These algorithms, except for AdaBoost, were implemented by ourselves. In the sampling and matrix multiplication versions the number of iterations was set to $T=10$.

As each data set was randomly split to validation set and training set, we can get different results when we run the tests again. Therefore, we give the minimum, maximum and average results of ten independent executions and ordinary running times.

We observed that the iterative versions of our algorithm (mostly the one we call sampling and the matrix multiplication) tend to overfit if there are only few training samples. That is why we decided to present two kinds of results for these methods: one derived from the AUC values of the 10th iterations of each execution and the other from the maximal AUC values of each execution. The latter may be interpreted as a simulation of early stopping: if overfitting is observed (when the AUC values start to decline), the execution is stopped before the end of the ten iterations.

We tested the methods on multiple data sets. The results are presented for two, publicly available data sets.

\begin{itemize}
	\item Cleveland\footnote{\label{note2}\raggedright\url{https://jamesmccaffrey.wordpress.com/2018/03/14/datasets-for-binary-classification/}}: the task is to recognize heart diseases. Our algorithms generally give better results than AdaBoost. ($N=272$, $d=13$, $W=286$)
	\item Banknote\cref{note2}: the task is to distinguish false banknotes from real ones. All methods give (nearly) perfect results. ($N=1234$, $d=4$, $W=88$)
\end{itemize}
\begin{table}[!ht]
	\centering
	\begin{tabular}{c|ccc|ccc}
		algorithm$\setminus$data set & \multicolumn{3}{c|}{Cleveland} & \multicolumn{3}{c}{Banknote} \\
		& Min & Max & Avg & Min & Max & Avg \\
		\hline
		AdaBoost                & 0.75 & 0.96 & 0.86 & 1.00 & 1.00 & 1.00 \\
		Expectation of quantum  & 0.84 & 0.98 & 0.91 & 0.88 & 0.97 & 0.94 \\
		Sampling (last)         & 0.74 & 0.98 & 0.86 & 0.98 & 1.00 & 0.99 \\
		Sampling (max)          & 0.88 & 0.99 & 0.93 & 0.98 & 1.00 & 0.99 \\
		Matrix multiplication (last)  & 0.83 & 0.99 & 0.91 & 0.89 & 0.97 & 0.94 \\
		Matrix multiplication (max)  & 0.85 & 0.99 & 0.92 & 0.92 & 0.98 & 0.96 \\
		Eigenvector             & 0.85 & 0.98 & 0.91 & 0.90 & 0.98 & 0.95 \\
	\end{tabular}
	\caption{AUC values for the different data sets}
	\label{tab:acc}
\end{table}
\begin{table}[!ht]
	\centering
	\begin{tabular}{c|cc}
		algorithm$\setminus$data set & Cleveland & Banknote\\
		\hline
		AdaBoost                & 0.12 s & 0.17 s\\
		Expectation of quantum  & 0.08 s & 0.09 s\\
		Sampling                & 0.40 s & 0.53 s\\
		Matrix multiplication   & 0.19 s & 0.21 s\\
		Eigenvector             & 0.20 s & 3.67 s\\
	\end{tabular}
	\caption{Running times for the different data sets}
	\label{tab:time}
\end{table}

The results show that among our methods, usually the realization with sampling gives the best classification results, but most of the times it is quite slow (maybe its code could be optimized). We can see that for certain data sets, any of our methods can give better results than AdaBoost, but only the sampling version can match it nearly in every case.

\section{Conclusions}

In this paper we show a linear time and a constant time equivalent classical algorithm to the quantum ensemble classifier by Schuld and Petruccione and suggested a new adaptive boosting algorithm as a combination of ideas taken from the classical equivalent classifier and AdaBoost.
We considered three realizations of the new algorithm and experimented on publicly available data sets. We found that if there are a lot of training samples, the matrix $MM'$ (or if the number of weak learners is large, the matrix $M'M$) is large, and thus the calculation of the eigenvector can be slow depending on which matrix is used in the computation. If there are few training samples, the sampling version may overfit quickly.

{
\bibliographystyle{actaplaindoi}
\bibliography{quantum}
}

\end{document}